\documentclass[11pt]{article}
\PassOptionsToPackage{hyphens}{url}
\usepackage[final]{acl}

\usepackage{times}
\usepackage{latexsym}
\usepackage[T1]{fontenc}
\usepackage[utf8]{inputenc}
\usepackage{microtype}
\usepackage{inconsolata}
\usepackage{graphicx}
\usepackage{amsmath,amssymb}
\usepackage{amsthm}
\usepackage{booktabs}
\usepackage{algorithm}
\usepackage{algpseudocode}
\usepackage{xspace}
\usepackage{multirow}
\usepackage{enumitem}
\usepackage{xcolor}
\usepackage{tabularx}
\usepackage{caption}
\usepackage{hyperref}
\hypersetup{
  colorlinks=true,
  linkcolor=blue,
  citecolor=blue,
  urlcolor=blue
}

\setlist{nosep,leftmargin=1.2em}
\newtheorem{proposition}{Proposition}
\newtheorem{definition}{Definition}

\newcommand{\system}{\textup{\textsc{MMA-RAG}$^{\textsc{T}}$}\xspace}
\newcommand{\mta}{\textup{\textsc{MTA}}\xspace}
\newcommand{\Amain}{\mathcal{A}_{\mathrm{main}}}
\newcommand{\Atrust}{\mathcal{A}_{\mathrm{trust}}}
\newcommand{\Lgen}{\mathcal{L}_{\mathrm{gen}}}
\newcommand{\Sadv}{\mathcal{S}_{\mathrm{adv}}}
\newcommand{\Senv}{\mathcal{S}_{\mathrm{env}}}
\newcommand{\Jutil}{J_{\mathrm{util}}}
\newcommand{\Jviol}{J_{\mathrm{viol}}}
\newcommand{\approve}{\textup{\textsc{Approve}}}
\newcommand{\mitig}{\textup{\textsc{Mitigate}}}
\newcommand{\refuse}{\textup{\textsc{Refuse}}}

\title{Adversarial Intent is a Latent Variable:\\Stateful Trust Inference for Securing Multimodal Agentic RAG}
\author{
Inderjeet Singh$^{1,\ast}$\quad
Vikas Pahuja$^{1}$\quad
Aishvariya Priya Rathina Sabapathy$^{1}$\quad
Chiara Picardi$^{1}$ \\
\bfseries Amit Giloni$^{1}$\quad
Roman Vainshtein$^{1}$\quad
Andr\'{e}s Murillo$^{1}$\quad
Hisashi Kojima$^{2}$ \\
\bfseries Motoyoshi Sekiya$^{2}$\quad
Yuki Unno$^{2}$\quad
Junichi Suga$^{2}$ \\[6pt]
\mdseries $^{1}$\textcolor{red}{\textbf{Fujitsu}} Research of Europe, UK\qquad
$^{2}$\textcolor{red}{\textbf{Fujitsu}} Limited, Japan \\[2pt]
$^\ast$Corresponding author: \texttt{inderjeet.singh@fujitsu.com}
}

\begin{document}
\maketitle

\begin{abstract}
Current stateless defences for multimodal agentic RAG fail to detect adversarial strategies that distribute malicious semantics across retrieval, planning, and generation components. We formulate this security challenge as a Partially Observable Markov Decision Process (POMDP), where adversarial intent is a latent variable inferred from noisy multi-stage observations. We introduce \system, an inference-time control framework governed by a \textbf{Modular Trust Agent} (\mta) that maintains an approximate belief state via structured LLM reasoning. Operating as a model-agnostic overlay, \system mediates a configurable set of internal checkpoints to enforce stateful defence-in-depth. Extensive evaluation on 43,774 instances demonstrates a $6.50\times$ average reduction factor in Attack Success Rate relative to undefended baselines, with negligible utility cost. Crucially, a factorial ablation validates our theoretical bounds: while statefulness and spatial coverage are individually necessary ($26.4$\,pp and $13.6$\,pp gains respectively), stateless multi-point intervention can yield zero marginal benefit under homogeneous stateless filtering when checkpoint detections are perfectly correlated.
\end{abstract}

\section{Introduction}
\label{sec:intro}

Retrieval-Augmented Generation (RAG) has evolved from static query-response pipelines~\citep{lewis2020rag,gao2024ragsurvey} into \emph{agentic} architectures where orchestrator LLMs plan multi-step workflows, invoke external tools, and synthesise \emph{multimodal} evidence~\citep{yao2022react,schick2023toolformer,ferrazzi2026agentic}. While this paradigm enhances capability, it introduces a fundamentally expanded threat model. Unlike conventional LLMs where adversarial content is confined to the user prompt, agentic systems ingest hostile artifacts at multiple internal junctures: knowledge stores may harbour poisoned documents~\citep{zou2024poisonedrag,cheng2024trojanrag}, images may encode embedded directives~\citep{figstep2023,zeeshan2025chameleon}, tool outputs may carry indirect prompt injections~\citep{greshake2023indirectprompt,zhan2024injecagent}, and the orchestrator's reasoning chain itself may be subverted through cascading influence~\citep{gu2024agentsmith,srivastava2025memorygraft}. Recent exploits involving Model Context Protocol (MCP) traffic further underscore the ubiquity of this surface~\citep{janjusevic2025hiding}.

\begin{figure}[t]
  \centering
  \includegraphics[width=\columnwidth]{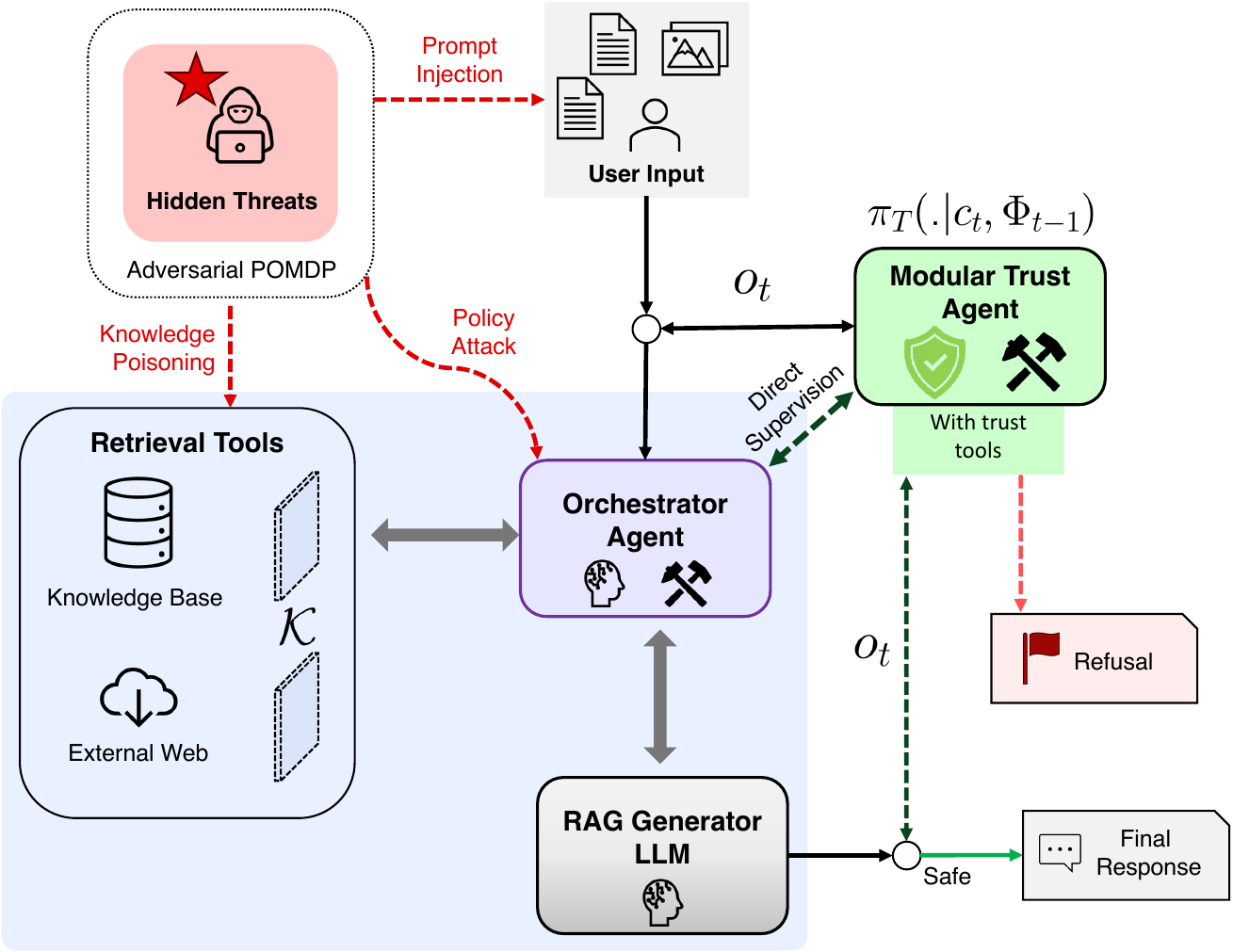}
  \caption{\textbf{The \system architecture.} The \mta ($\Atrust$) intercepts artifacts at a set of checkpoints (\textbf{C1}-\textbf{C5} in our evaluated instantiation), executes Algorithm~\ref{alg:mta} at each, and maintains a cumulative belief state $\Phi_t$. Decisions $\delta_t \!\in\! \{\approve, \mitig, \refuse\}$ gate artifact passage through the pipeline.}
  \label{fig:architecture}
\end{figure}

The current defensive landscape exhibits structural gaps rooted in a failure to model this complexity. \emph{Stateless boundary filters} like Llama Guard~\citep{inan2023llamaguard} and NeMo Guardrails~\citep{rebedea2023nemo} evaluate artifacts in isolation, discarding the interaction history necessary to detect multi-stage attacks. \emph{Retrieval-stage} defences~\citep{masoud2026sdrag,pathmanathan2025ragpart,tan2024revprag} and \emph{injection-specific} methods~\citep{yu2026defense,wallace2024instructionhierarchy,suo2025argus} protect isolated vectors but lack mechanisms to correlate evidence across the pipeline. Even \emph{trajectory-based detectors}~\citep{advani2026trajectory} operate reactively on completed sequences rather than proactively intercepting threats. Crucially, these approaches treat system interactions as independent events. In our B1 ablation (\S\ref{sec:ablation}), removing belief state increases ASR by $26.4$\,pp. Moreover, under homogeneous stateless filtering where the same underlying adversarial artifact is observed through highly correlated views across checkpoints, adding more checkpoints can yield zero marginal detection benefit: the detection events become effectively perfectly correlated, so the union probability collapses to the single-checkpoint probability.

We argue that these failures stem from a single control-theoretic problem: \emph{partial observability} of adversarial intent. At any pipeline stage, a trust mechanism observes only artifact content; whether that artifact is malicious is a latent variable that must be inferred from noisy, sequential signals. This structure maps naturally to a Partially Observable Markov Decision Process (POMDP)~\citep{kaelbling1998pomdp}, where optimal defence requires maintaining a belief distribution over hidden adversarial states.

We introduce \system (Figure~\ref{fig:architecture}), a defence framework whose central \textbf{Modular Trust Agent} (\mta) operationalises this probabilistic insight. Our contributions are:

\textbf{(C1) Theoretical Formalism:} We model the agentic security challenge as an adversarial POMDP and derive design principles (Propositions~\ref{prop:memory} and~\ref{prop:correlation}) characterising when stateless multi-point intervention yields no marginal benefit, and why stateful belief tracking provides strict gains under partial observability~(\S\ref{sec:pomdp}).
\textbf{(C2) Architecture:} The \mta enforces defence-in-depth by mediating a configurable set of internal checkpoints, maintaining an approximate belief state via structured LLM inference to detect compound attacks that appear benign in isolation~(\S\ref{sec:checkpoints}).
\textbf{(C3) Empirical Validation:} Evaluated on 43,774 instances across five threat surfaces (ART-SafeBench suite~\citep{bsafety2026}), \system reduces Attack Success Rate (ASR) by $6.50\times$ on average (reduction factor). A $2\!\times\!2$ factorial ablation confirms that statefulness and spatial coverage are individually necessary, validating our theoretical derivations~(\S\ref{sec:results}).
\textbf{(C4) Deployability \& Limits:} The \mta operates as a model-agnostic inference overlay requiring no fine-tuning or weight access. We also identify a \emph{semantic resolution limit} in tool-flip attacks (B4, $1.29\times$ reduction), delineating the boundary where text-based belief tracking must be augmented by deterministic governance~(\S\ref{sec:discussion}).

The evaluation leverages ART-SafeBench; we summarise the benchmark generation pipeline in \S\ref{sec:setup} and provide full details in Appendix~\ref{app:artsafebench_generation}.

\section{Threat Model}
\label{sec:threat}

\paragraph{System and Adversary.}
We model the agentic RAG system as a tuple $(\Amain, \mathcal{K}, \mathcal{T}, \Lgen)$ mediated by an independent trust agent $\Atrust$. Execution follows a trajectory $\tau = (s_0, a_0, o_1, \ldots)$ over state space $\mathcal{S}$ and action space $\mathcal{A} = \mathcal{A}_{\mathrm{main}} \cup \mathcal{A}_{\mathrm{trust}} \cup \mathcal{A}_{\mathrm{gen}}$~(\S\ref{sec:pomdp}). The adversary $\mathcal{A}_{\mathrm{adv}}$ targets OWASP-aligned objectives~\citep{owasp2025llm}: \textbf{(O1)}~harmful outputs, \textbf{(O2)}~control subversion, \textbf{(O3)}~exfiltration, and \textbf{(O4)}~resource exhaustion. Access is either black-box ($\Gamma_{\mathrm{BB}}$) or partial ($\Gamma_{\mathrm{PA}}$; poisoning $\mathcal{K}_{\mathrm{adv}} \subset \mathcal{K}$), without white-box visibility into $\Atrust$ or model weights.

\paragraph{Attack Surfaces and Partial Observability.}
Adversarial artifacts enter via inputs~($\mathcal{I}$), storage~($\mathcal{K}$), tools~($\mathcal{T}$), or inter-component channels~($\mathcal{IC}$), instantiating three overlapping scenarios: \textbf{S1\,(Retrieval manipulation):} poisoned content in $\mathcal{K}$ surfacing via semantic similarity; \textbf{S2\,(Directive injection):} inputs overriding system instructions (O1-O2); and \textbf{S3\,(State corruption):} inputs biasing $\Amain$ toward tool misuse or drift. Crucially, adversarial intent is a \emph{latent variable}. Sophisticated adversaries can distribute attacks across stages (chaining S1-S3) such that no single observation reveals the composite threat. This renders stateless defences structurally inadequate and motivates the \mta's stateful belief tracking~(\S\ref{sec:pomdp}) and multi-stage intervention~(\S\ref{sec:checkpoints}).

\paragraph{Defender Assumptions.}
Safety is defined via predicates $\{\phi_c\}_{c=1}^{C}$ over trajectories, where $\phi_c(\tau)\!=\!1$ iff any step triggers violation $g_c(s_t,a_t,o_{t+1})\!=\!1$ (e.g., forbidden content emission, disallowed tool execution). We assume $\Atrust$ executes in a protected environment isolated from $\Amain$ to decorrelate failure modes~(\S\ref{sec:resilience}).

\section{The \system Framework}
\label{sec:framework}

\subsection{System Architecture}
\label{sec:architecture}

\system augments a representative ReAct-style agentic RAG instantiation (used in our experiments) with the \mta (Figure~\ref{fig:architecture}; full integration in Algorithm~\ref{alg:full}, Appendix~\ref{app:fullalg}), acting as a runtime security overlay. Practical agentic RAG systems vary in orchestration, memory, retrieval stacks, and tool routers; \system attaches via interposition at a configurable set of trust boundaries rather than assuming a fixed pipeline. The architecture comprises four primary components mediated by the trust layer:
\textbf{(i)}~\emph{Orchestrator}~$\Amain$: a ReAct-style~\citep{yao2022react} planning engine (LangGraph;~\citealp{langgraph2024}) responsible for query decomposition, tool selection, and evidence synthesis.
\textbf{(ii)}~\emph{Knowledge Store}~$\mathcal{K}$: a multimodal vector database (ChromaDB;~\citealp{chromadb2024}) with OpenCLIP ViT-g-14 embeddings~\citep{radford2021clip,openclip2023}; images grounded via Tesseract OCR~\citep{tesseract2024}.
\textbf{(iii)}~\emph{Generator}~$\Lgen$: produces the final user-facing response conditioned on retrieved context.
\textbf{(iv)}~\emph{Trust Agent}~$\Atrust$: an isolated subsystem maintaining its own LLM instance, persistent belief state, and communicating exclusively through a configurable set of interception checkpoints.
Critically, $\Atrust$ is orthogonal to the functional pipeline: it attaches to $\Amain$, $\mathcal{K}$, and $\Lgen$ via API interposition, requiring no weight access or architectural modification.

\subsection{Adversarial POMDP Formalisation}
\label{sec:pomdp}

We ground the \mta's design in the POMDP framework. This formalism provides the axiomatic basis for reasoning about adversarial intent as a latent variable under the partial observability established in \S\ref{sec:threat}.

\begin{definition}[Adversarial Agentic RAG POMDP]
\label{def:pomdp}
The interaction is defined by the tuple $(\mathcal{S}, \mathcal{A}, \Omega, T, O, J)$ where:
$\mathcal{S} = \Senv \times \Sadv$ factors into the observable environment state (documents, tool outputs) and the latent adversarial state (benign vs.\ malicious intent);
$\mathcal{A} = \{\approve, \mitig, \refuse\}$ is the trust-agent action space; the remaining pipeline dynamics are absorbed into $T$;
$\Omega$ is the observation space for $\Atrust$. Each checkpoint yields an observation $o_t \in \Omega$ comprising the current artifact $C_t$, its type, and channel metadata (e.g., retrieval provenance, tool name, or parsing context);
$T\!:\!\mathcal{S}\!\times\!\mathcal{A}\!\to\!\Delta(\mathcal{S})$ is the transition kernel; and
$O\!:\!\mathcal{S}\!\times\!\mathcal{A}\!\to\!\Delta(\Omega)$ is the observation function mapping hidden states to observations.
The objective $J = \Jutil - \lambda \sum_{t=1}^{H} \Jviol(s_t, a_t)$ trades off task utility (including refusal cost) against safety violations ($\Jviol \!\in\! \{0,1\}$; $\lambda > 0$).
\end{definition}

\paragraph{Approximate Belief State Maintenance.}
In a canonical POMDP, the agent maintains a sufficient statistic $b_t \in \Delta(\mathcal{S})$ via the exact Bayesian update:
\begin{equation}
\label{eq:bayes}
b_t(s') \propto O(o_t \!\mid\! s', a_{t\!-\!1}) \sum_{s} T(s' \!\mid\! s, a_{t\!-\!1})\, b_{t\!-\!1}(s).
\end{equation}
For agentic RAG, the state space $\mathcal{S}$ encompasses high-dimensional document semantics, tool configurations, reasoning traces, and adversarial strategies, rendering exact belief maintenance via Eq.~\eqref{eq:bayes} intractable.
The \mta therefore maintains a structured natural-language \emph{approximate belief state} $\Phi_t$ (an information state that serves as a tractable proxy for the exact posterior $b_t$):
\begin{equation}
\label{eq:belief}
\Phi_t = f_{\theta}\!\left(\Phi_{t-1},\; o_t\right),
\end{equation}
where $f_{\theta}$ is a single frozen LLM inference call that ingests the prior state $\Phi_{t-1}$ (which encodes decision history) and the current observation $o_t$.
The prompt (Appendix~\ref{app:prompt}) instructs the LLM to produce a structured summary encoding cumulative risk indicators, a threat-level assessment, and the decision history for the current query.
$\Phi_t$ plays the functional role of an \emph{approximate information state}~\citep{kaelbling1998pomdp}: it compresses the history $h_t = (o_1, \delta_1, \ldots, o_t)$ into a fixed-format representation that the policy conditions upon, bypassing explicit value iteration.
Ablating $\Phi_t$ degrades defence by $26.4$\,pp~(\S\ref{sec:ablation}), confirming that this approximation captures decision-relevant latent variables.

\paragraph{Belief compression and token efficiency.}
An alternative to maintaining $\Phi_t$ is to append the full interaction log $h_t$ into the trust-agent prompt at each checkpoint. For $K$ checkpoint interceptions in a query, this yields input contexts whose length grows with $t$, resulting in $\mathcal{O}(K^2)$ total prompt tokens across checkpoints. In contrast, $\Phi_t$ is a fixed-format summary with bounded size, yielding $\mathcal{O}(K)$ scaling and enabling stateful trust inference under production context limits.

We derive two structural properties from this formulation that generate falsifiable predictions validated in \S\ref{sec:results}.

\begin{proposition}[Strict Value of Memory]
\label{prop:memory}
Let $\Pi_\Omega = \{\pi : \Omega \to \mathcal{A}\}$ denote the set of stateless (observation-memoryless) policies and $\Pi_\Phi = \{\pi : \Omega \times \Phi \to \mathcal{A}\}$ denote belief-conditioned policies.
Define the value function $V(\Pi) \triangleq \sup_{\pi \in \Pi}\, \mathbb{E}_\pi[J]$.
Then:
\begin{equation}
\label{eq:dominance}
V(\Pi_\Phi) \;\geq\; V(\Pi_\Omega),
\end{equation}
and under partial observability the inequality can be strict, e.g., when there exist histories $h_t,h'_t$ with the same instantaneous observation $o_t$ but different posteriors over $\Sadv$.
\end{proposition}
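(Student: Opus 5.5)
The weak inequality follows from an embedding argument. Any stateless policy $\pi \in \Pi_\Omega$ can be lifted to a belief-conditioned policy $\tilde\pi \in \Pi_\Phi$ by setting $\tilde\pi(o,\Phi) \triangleq \pi(o)$, so that $\tilde\pi$ simply ignores its second argument. Along every trajectory, $\tilde\pi$ selects exactly the same actions as $\pi$. The trajectory distribution induced through $T$ and $O$ is therefore identical, and $\mathbb{E}_{\tilde\pi}[J]=\mathbb{E}_\pi[J]$. Hence the image of $\Pi_\Omega$ under this lifting is a subset of $\Pi_\Phi$ with the same values. Taking suprema over a superset cannot decrease the value, which gives Eq.~\eqref{eq:dominance}. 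The only care needed is that $\Phi_t$ is produced by the fixed map $f_\theta$ of Eq.~\eqref{eq:belief}, independent of the policy's choice. Since $\Phi_t$ depends on $\delta_{1:t-1}$ through the history, it is enough to note that the lifted policy's actions coincide with $\pi$'s, so the induced $\Phi$-process is also the same. I will state this coupling explicitly.

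For strictness it suffices to give one witness instance, since the statement says the inequality \emph{can} be strict. I will build a minimal two-step POMDP. Let $\Sadv=\{\text{benign},\text{malicious}\}$ with prior $1/2$ each. At $t=1$ the observation is $o_1\in\{x,y\}$, and it is correlated with intent: benign yields $x$ and malicious yields $y$. This models a first artifact carrying a partial cue. At $t=2$ the observation is always the same symbol $z$, regardless of intent; this models a compound attack whose second fragment looks benign in isolation. Violation occurs, with $\Jviol=1$, iff the state is malicious and the action at $t=2$ is \approve. Refusing at $t=2$ incurs a utility cost $c>0$ when the state is benign, and at $t=1$ approving is costless in both states. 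Then the histories $h_2=(x,\approve,z)$ and $h_2'=(y,\approve,z)$ share the instantaneous observation $z$ but induce posteriors $0$ and $1$ on malicious intent, exactly the condition in the statement.

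Next I compute the two values. A stateless policy must map $z$ to a single action, possibly randomised; deterministic suffices by linearity. It therefore either approves at $t=2$, with expected value $-\lambda/2$, or refuses, with expected value $-c/2$. This gives $V(\Pi_\Omega)=-\tfrac12\min(\lambda,c)<0$. A belief-conditioned policy instead sets $\pi(z,\Phi_2)=\refuse$ iff $\Phi_2$ records the cue $y$, and $\approve$ otherwise. This attains $0$, the maximum possible, provided $\Phi_2$ retains enough of $h_2$ to distinguish $x$ from $y$. Since $\lambda,c>0$, the inequality is strict.

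The main obstacle is the premise that $\Phi$ actually separates $h_2$ from $h_2'$. The belief state is a learned LLM summary, not the exact posterior $b_t$. So I will phrase strictness conditionally: it holds whenever $f_\theta$ is sufficiently informative that $\Phi_t$ is not constant on histories with different posteriors over $\Sadv$. In the idealised case where $\Phi_t=b_t$ or $\Phi_t=h_t$, this is automatic, and the bound is then an instance of the classical result that history-dependent policies dominate memoryless ones in POMDPs. If instead $\Phi$ collapses to a constant, the strict gap vanishes and only Eq.~\eqref{eq:dominance} survives.
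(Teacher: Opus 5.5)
Your weak inequality is correct and is the paper's set-inclusion step made precise. The lifting $\tilde\pi(o,\Phi)=\pi(o)$ induces the same $\Phi$-process, and making strictness conditional on $f_\theta$ actually separating $h_t$ from $h'_t$ is a fair refinement of what the paper leaves implicit.

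The gap is in your strictness witness. Your computation of $V(\Pi_\Omega)$ only asks what a stateless policy does at $z$. But a stateless policy also acts at $t=1$, where $o_1=y$ already identifies the malicious state with certainty. In the paper's model \refuse\ terminates the workflow (Algorithm~\ref{alg:mta}), and, as at $t=2$ in your cost structure, refusing a malicious run is free. So the stateless policy $\pi(x)=\approve$, $\pi(y)=\refuse$, $\pi(z)=\approve$ incurs neither $\lambda$ nor $c$. Hence $V(\Pi_\Omega)=0=V(\Pi_\Phi)$, and your instance does not separate the classes.

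This is structural, not bookkeeping. When $o_2$ carries no information about $\Sadv$, the posterior after $(o_1,z)$ equals the posterior after $o_1$. So whatever a belief-conditioned policy decides at $t=2$, a stateless policy can decide at $t=1$, unless early refusal is penalised. Aliasing at one step is therefore not sufficient on its own; the paper's sketch hides the same issue behind ``in general''. You need an instance in which the discriminating evidence cannot be acted on when it arrives.

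A natural repair, matching the paper's ``compound attacks that appear benign in isolation'', is an XOR instance:
\begin{itemize}
\item intent is uniform, and $o_1\in\{x,y\}$ is uniform and independent of intent;
\item $o_2=z$ iff ($o_1=x$ and benign) or ($o_1=y$ and malicious), and $o_2=w$ otherwise.
\end{itemize}
Each observation alone is independent of intent, but the pair determines it. Keep your costs: $c$ for refusing a benign run at any step, $\lambda$ for approving a malicious run at $t=2$, \refuse\ terminating, and \mitig\ treated as \approve.

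The belief-conditioned policy approves at $t=1$ and refuses at $t=2$ exactly on $(x,w)$ and $(y,z)$; it attains $0$. Enumerating $(\pi(x),\pi(y))$ for stateless policies gives:
\begin{itemize}
\item refusing on both values of $o_1$ gives $-c/2$;
\item approving on both leaves $o_2$ uninformative, giving $-\min(\lambda,c)/2$;
\item refusing on exactly one value of $o_1$ makes $o_2$ informative on the surviving branch, but still refuses benign runs with probability $1/4$, giving $-c/4$.
\end{itemize}
Hence $V(\Pi_\Omega)=-\min(c/4,\lambda/2)<0$, a strict gap. You could instead stipulate that the $t=1$ action has no effect on the outcome, but that contradicts the paper's termination semantics.
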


\noindent\emph{Proof sketch.}\; See Appendix~\ref{app:proof_sketches}.

\begin{proposition}[Checkpoint Correlation Structure]
\label{prop:correlation}
Consider a fixed query execution with $K$ checkpoint interceptions. Let $D_t \!\in\! \{0,1\}$ denote the event that checkpoint $t$ detects an adversarial artifact.
\emph{(i)} Under a homogeneous stateless detector $d\!:\!\Omega\!\to\!\{0,1\}$ applied at every checkpoint, if the observations of a given adversarial artifact across checkpoints fall in the same equivalence class of $d$, then $D_t = D_1$ for all $t$, so $\rho(D_i, D_j) = 1$ for all pairs $i,j$ and $P(\bigcup_{t=1}^{K} \{D_t\!=\!1\}) = P(D_1\!=\!1)$. Additional stateless checkpoints yield no marginal detection benefit in this regime.
\emph{(ii)} Under a belief-conditioned detector $d_t\!:\!\Omega \times \Phi_{t-1}\!\to\!\{0,1\}$, evidence accumulated in $\Phi_{t-1}$ can push a later checkpoint past the detection threshold even when $o_t$ alone is insufficient; consequently, one can have $P(\bigcup_{t=1}^{K} \{D_t\!=\!1\}) > P(D_1\!=\!1)$.
\end{proposition}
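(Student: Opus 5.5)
Part (i) follows almost directly from the definitions. For part (ii), I will give an explicit two-checkpoint construction showing the strict inequality is attainable.

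\emph{Part (i).} Fix the adversarial artifact and a query execution. The hypothesis says all observations $o_1,\dots,o_K$ lie in one equivalence class of $d$, meaning $d(o_t)=d(o_1)$ for every $t$. Because $d$ is a deterministic function of $\Omega$ alone, $D_t=d(o_t)=d(o_1)=D_1$ pointwise on the sample space.

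From this pointwise identity I will read off three consequences.
\begin{itemize}
\item For the correlation: $\mathrm{Cov}(D_i,D_j)=\mathrm{Var}(D_1)$ and $\mathrm{Var}(D_i)=\mathrm{Var}(D_j)=\mathrm{Var}(D_1)$, so $\rho(D_i,D_j)=1$. This needs the non-degenerate case $0<P(D_1=1)<1$; in the degenerate case I will adopt the convention $\rho=1$, or note that the remaining conclusions hold regardless.
\item For the union: $\bigcup_t\{D_t=1\}=\{D_1=1\}$ as events, so the probabilities are equal.
\item For marginal benefit: the increment $P(\bigcup_{t\le k+1}\{D_t=1\})-P(\bigcup_{t\le k}\{D_t=1\})$ vanishes for every $k$.
\end{itemize}
Randomness in $o_t$, for example from the retrieval state, does not affect this argument, because the identity holds for each realisation.

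\emph{Part (ii).} Since the claim is "one can have", an existence construction suffices. I will take $K=2$ and a scalar belief $\Phi_t\in[0,1]$ tracking a suspicion score. The detector is $d_t(o,\Phi)=\mathbf{1}[s(o)+\Phi\ge\theta]$, and the update is $\Phi_t=\Phi_{t-1}+s(o_t)$, capped at $1$.

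The observations are chosen as follows:
\begin{itemize}
\item a distributed attack emits $o_1,o_2$ with $s(o_1)=s(o_2)=\theta/2+\epsilon<\theta$;
\item benign executions emit scores near $0$.
\end{itemize}
Starting from $\Phi_0=0$, the first checkpoint gives $D_1=0$, while the second gives $D_2=\mathbf{1}[s(o_2)+\Phi_1\ge\theta]=1$. If such attacks occur with positive probability $p$, then $P(D_1\cup D_2)\ge P(D_1=1)+p>P(D_1=1)$.

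I will then connect this to Proposition~\ref{prop:memory}. The construction is exactly a pair of histories sharing the instantaneous observation $o_2$ but with different posteriors over $\Sadv$. Hence the same $o_2$ is flagged after a suspicious history and passed after a benign one, which no stateless $d$ can replicate.

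\emph{Main obstacle.} The main difficulty is not technical but definitional. I must ensure that (ii) respects the same per-artifact observation model as (i), without simply assuming heterogeneous detectors. The construction keeps the same observation-scoring map $s$ at every checkpoint and varies only $\Phi_{t-1}$. This isolates statefulness, rather than detector diversity, as the source of the gain, which mirrors the factorial ablation.
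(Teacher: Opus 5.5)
Your proof is correct and follows the paper's own argument. Part (i) is the same pointwise identity $D_t=d(o_t)=d(o_1)=D_1$, and your additive-threshold construction in part (ii) is a concrete instance of the paper's ``belief as integrator'' sketch. It improves on the sketch in two ways: it gives an explicit witness for the existential claim, where the paper instead asserts a general monotone increase that the claim does not need, and it flags the case $P(D_1=1)\in\{0,1\}$ where $\rho$ is undefined.

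Only your side remark linking to Proposition~\ref{prop:memory} overreaches slightly. As written, no benign history emits the same $o_2$, because benign scores are near $0$. To get that pair you would need to add a benign history with $s(o_1)\approx 0$ and $s(o_2)=\theta/2+\epsilon$, though part (ii) itself does not depend on it.
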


\noindent\emph{Proof sketch.}\; See Appendix~\ref{app:proof_sketches}.

\subsection{Inference-Action Loop}
\label{sec:loop}

At each checkpoint $t$, the \mta executes a four-step control cycle (Algorithm~\ref{alg:mta}, Appendix~\ref{app:algorithm}):
\textbf{(1)~Observe}: Intercept artifact $C_t$ and retrieve belief state $\Phi_{t-1}$.
\textbf{(2)~Infer}: Execute the trust policy $\pi_T$ via LLM inference to generate a risk assessment $r_t$ and updated belief $\Phi_t$.
\textbf{(3)~Decide}: Map $r_t$ to an action $\delta_t \in \{\approve, \mitig, \refuse\}$.
\textbf{(4)~Act}: Enforce $\delta_t$ and commit $\Phi_t$ to memory.

\paragraph{Inference-aligned policy execution.}
\label{sec:policy}
The trust policy $\pi_T$ is executed through LLM inference rather than a separate rule engine.
Policy specifications (threat indicators, evidence-weighting heuristics, decision thresholds, JSON output schema) are encoded as structured instructions in the \mta's system prompt (Appendix~\ref{app:prompt}).
This design leverages the LLM's pre-trained knowledge of adversarial patterns and social engineering tactics without requiring explicit attack-signature enumeration, enabling context-sensitive compositional reasoning: for instance, recognising that ``ignore previous context'' constitutes an injection attempt even in novel phrasings.
Because $\pi_T$ is specified at inference time, $\Atrust$ is model-agnostic: it can be instantiated with either general-purpose foundation models or smaller specialist safety models (including fine-tuned LLMs), and can use heterogeneous backbones across checkpoints to trade off latency and fidelity.
This also provides a direct upgrade path: improving or fine-tuning the trust-agent model can improve assessment and mitigation fidelity without modifying or retraining $\Amain$ or $\Lgen$.

\paragraph{Formal-to-implementation mapping.}
The POMDP objects are instantiated as follows:
(1)~$\Phi_{t-1}$ (``Current Suspicion Context'' in the prompt; Appendix~\ref{app:prompt}) approximates the belief $b_t$ (Eq.~\ref{eq:bayes});
(2)~the LLM's ``Updated Suspicion Context'' implements an approximate belief update; and
(3)~$\pi_T$, encoded as structured system-prompt instructions, maps $(\Phi_{t-1}, o_t)$ to $\delta_t$ as the approximate policy.
This grounding ensures that the theoretical benefits of belief-state estimation are realised in the implementation without the tractability costs of explicit Bayesian filtering, and generates the predictions validated in \S\ref{sec:ablation}.

\paragraph{Effector Primitives.}
Upon deciding $\delta_t$: \textbf{(a)}~$\approve$: artifact passes unchanged; \textbf{(b)}~$\mitig$: the \mta outputs a sanitised artifact that preserves benign content while neutralising threats. Mitigation can be implemented via LLM rewriting, deterministic safety tools (e.g., argument-schema validation for tool calls, allowlists, PII scrubbing, URL stripping), or a hybrid; \textbf{(c)}~$\refuse$: workflow is terminated with a safe fallback response.
Operationally, these safety tools execute within $\Atrust$'s isolated environment and return only sanitised artifacts or validated actions through the interception interface, keeping integration non-invasive.

\subsection{Configurable Checkpoints: Defence-in-Depth}
\label{sec:checkpoints}

The \mta implements per-stage filtering to address the specific attack surfaces defined in \S\ref{sec:threat}. In general, the checkpoint set is determined by the system's trust boundaries; we instantiate a representative checkpoint set (C1-C5) mapping to common boundaries in multimodal agentic RAG:
\textbf{C1\,(Query Screening):} Filters user input $\mathcal{I}$ for jailbreaks and payload encoding.
\textbf{C2\,(Action Validation):} Validates $\Amain$'s tool selection against the query intent, detecting agentic state corruption (S3).
\textbf{C3\,(Retrieval Screening):} Filters $\mathcal{K}$ retrievals for poisoned content (S1) before generator consumption.
\textbf{C4\,(Tool Output Screening):} Inspects external tool returns $\mathcal{T}$ for indirect injections.
\textbf{C5\,(Response Verification):} Validates the final generation $\Lgen$ for policy compliance.
No single checkpoint suffices: an adversary aware of C1 routes attacks through $\mathcal{K}$; one aware of C3 uses tool outputs; one aware of C5 distributes the attack across earlier stages.
Within our instantiation, restricting defence to C5 alone incurs a $13.6$\,pp degradation~(\S\ref{sec:ablation}), demonstrating the necessity of intercepting attacks at their source before downstream reasoning corruption occurs.

\subsection{Architectural Resilience}
\label{sec:resilience}

To prevent the \mta from becoming a single point of failure, we enforce three properties:
\emph{Execution Isolation}: $\Atrust$ operates in a disjoint memory space with independent system prompts, preventing context contamination from $\Amain$.
\emph{Model Heterogeneity}: $\Atrust$ may utilise a distinct model architecture to mitigate transfer attacks.
\emph{Auditability}: All belief updates and decisions are immutable logs, facilitating post-hoc forensic analysis and iterative policy refinement.


\section{Experimental Setup}
\label{sec:setup}

\paragraph{System Configuration.}
We evaluate \textbf{\system} against a baseline multimodal agentic RAG pipeline with no trust layer (identical architecture without $\Atrust$). The baseline is a representative instantiation used solely for controlled evaluation and is not itself claimed as a contribution. Both systems share a fixed infrastructure: ChromaDB vector store with OpenCLIP ViT-g-14 embeddings~\citep{radford2021clip,openclip2023}, Tesseract v5.3.1 for OCR, and LangGraph~\citep{langgraph2024} for orchestration.
To ensure reproducibility, all deterministic operations (generation, judgment, and \mta policy execution) use temperature $T\!=\!0.0$.
We assess generalisation across four LLM backbones: GPT-4o, GPT-4o-mini, Llama-3.3-70B, and GPT-4.1. The \mta itself employs a frozen GPT-4o instance for inference and mitigation. Unless otherwise noted, aggregate results report the GPT-4o backbone.

\paragraph{Evaluation Corpus.}
Security is assessed on the \textbf{ART-SafeBench} suite~\citep{bsafety2026}, comprising 43,774 validated adversarial instances across five threat surfaces:
\textbf{B1\,(Text Poisoning, $n\!=\!10{,}943$):} Instruction overrides and persona manipulations embedded in retrieved documents.
\textbf{B2\,(Image Poisoning, $n\!=\!4{,}000$):} Adversarial text programmatically rendered onto images, split into OCR-mediated (B2a) and direct multimodal (B2b) vectors.
\textbf{B3\,(Direct Query, $n\!=\!10{,}005$):} Zero-shot adversarial prompts spanning OWASP Top-10 categories~\citep{owasp2025llm}.
\textbf{B4\,(Tool-Flip, $n\!=\!14{,}400$):} Dual-query injections forcing semantically plausible but adversarially motivated tool switching.
\textbf{B5\,(Agentic Integrity, $n\!=\!4{,}426$):} Multi-turn interactions targeting reasoning chain corruption and incremental policy drift.
Instances are generated via a closed-loop validated synthesis pipeline with $\sim$37\% acceptance rate. We provide a four-stage summary in Appendix~\ref{app:artsafebench_generation} and list the corresponding five-step instantiation for reproducibility; record schema in Appendix~\ref{app:schema}.
\textbf{Utility} is measured on Natural Questions (dev, $n\!=\!3{,}610$)~\citep{kwiatkowski2019nq} using the ARES framework~\citep{saadfalcon2024ares} for Context Relevance (CR) and Answer Relevance (AR).

\paragraph{Adjudication Protocol.}
We define Attack Success Rate (ASR) as the proportion of trajectories satisfying an adversarial success predicate $\kappa$:
\begin{equation}
\mathrm{ASR} = \frac{1}{|\mathcal{D}_{\mathrm{atk}}|} \sum_{x \in \mathcal{D}_{\mathrm{atk}}} \mathbb{I}\bigl[\kappa(g, x, T(x)) \geq \tau\bigr].
\end{equation}
For B1, B2, B3, and B5, $\kappa$ is a deterministic GPT-4o judge ($T\!=\!0$). For B4 (Tool-Flip), $\kappa$ is a formal predicate verifying divergence from the ground-truth tool selection.
To validate the automated judge, we manually audited a stratified sample of 750 instances, yielding $>$95\% agreement overall (100\% on B1/B2/B4/B5; 85\% on B3 due to subjective ambiguity in bias categories)~\citep{zheng2023judgellm}.

\section{Results and Analysis}
\label{sec:results}

\subsection{Utility Preservation}
\label{sec:utility}

The addition of the trust layer introduces negligible degradation in downstream utility (Table~\ref{tab:utility}). \system achieves CR~$= 0.87$ and AR~$= 0.93$, comparable to the baseline (CR~$= 0.88$, AR~$= 0.93$) and competitive with established baselines under the same ARES protocol~\citep{saadfalcon2024ares}. The minimal CR reduction confirms that the \mta's filtering preserves the semantic integrity of valid retrieval-generation workflows.

\begin{table}[t]
\centering
\footnotesize
\setlength{\tabcolsep}{5pt}
\renewcommand{\arraystretch}{1.05}
\begin{tabularx}{\columnwidth}{@{}Xcc@{}}
\toprule
\textbf{Configuration} & \textbf{CR\,$\uparrow$} & \textbf{AR\,$\uparrow$} \\
\midrule
ColBERTv2 + GPT-4 & 0.90 & 0.82 \\
DPR + BART & 0.62 & 0.58 \\
\midrule
Agentic RAG \emph{(baseline)} & 0.88 & 0.93 \\
\system \emph{(+\,\mta)} & 0.87 & 0.93 \\
\bottomrule
\end{tabularx}
\caption{Utility on Natural Questions (dev, $n\!=\!3{,}610$). ARES scores (context relevance CR, answer relevance AR). Established baselines from \citet{saadfalcon2024ares} included for reference. Differences between the baseline and \system are small.}
\label{tab:utility}
\end{table}

\subsection{Defence Efficacy}
\label{sec:security}

Across the 43,774 instances of the ART-SafeBench suite, \system reduces ASR with an average reduction factor of $6.50\times$ (Table~\ref{tab:main_results}, Figure~\ref{fig:results}(a)). We analyse performance by threat surface.

\begin{table}[t]
\centering
\footnotesize
\setlength{\tabcolsep}{5pt}
\renewcommand{\arraystretch}{1.08}
\begin{tabularx}{\columnwidth}{@{}Xrrr@{}}
\toprule
\textbf{Benchmark} & \textbf{Baseline} & \textbf{+\,\mta} & \textbf{Red.} \\
\midrule
B1: Text poisoning    & 76.5 & 15.7 & $4.87\!\times$ \\
B2a: Image (OCR)      & 22.3 &  2.0 & $11.15\!\times$ \\
B2b: Image (MM)       & 10.1 &  5.5 & $1.84\!\times$ \\
B3: Direct query      & 59.0 &  4.1 & $14.39\!\times$ \\
B4: Tool-flip         & 82.1 & 63.5 & $1.29\!\times$ \\
B5: Agentic integrity & 48.5 &  8.9 & $5.45\!\times$ \\
\midrule
\textbf{Avg.\ (factor)} & \textbf{-} & \textbf{-} & $\mathbf{6.50\!\times}$ \\
\bottomrule
\end{tabularx}
\caption{ASR~(\%, $\downarrow$) across ART-SafeBench benchmarks with GPT-4o backbone. Reduction factor $=$ baseline~ASR\,/\,defended~ASR. Avg.\ (factor) is the unweighted mean of per-benchmark reduction factors.}
\label{tab:main_results}
\end{table}

\begin{figure*}[t]
  \centering
  \includegraphics[width=\textwidth]{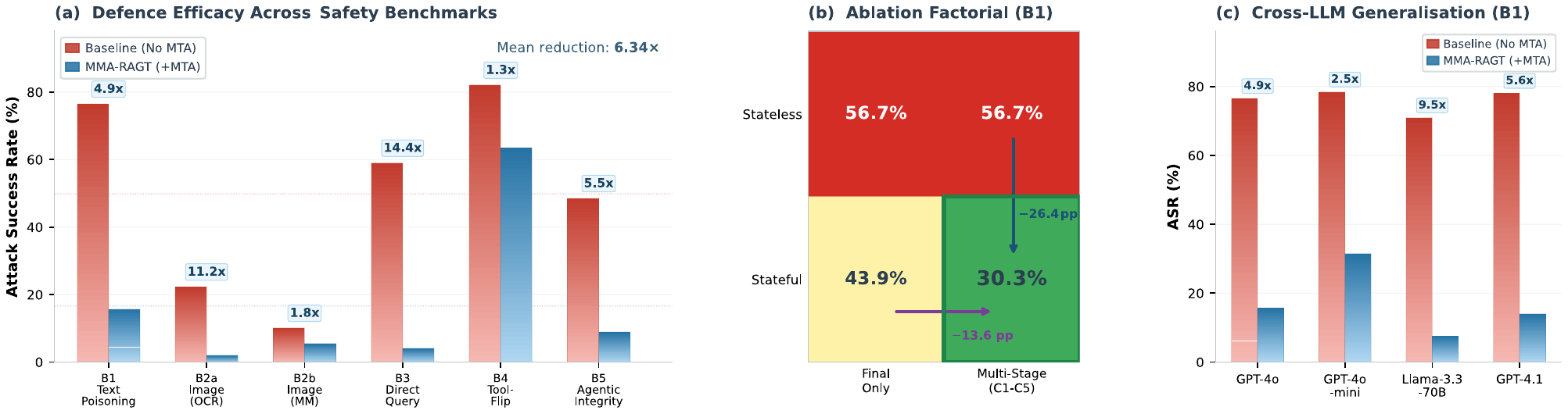}
  \caption{\textbf{Experimental results overview.}
  \textbf{(a)}~ASR across all ART-SafeBench benchmarks: the \mta reduces ASR on every surface, with reduction factors ranging from $1.3\times$ (B4, tool-flip) to $14.4\times$ (B3, direct query); mean factor $6.50\times$.
  \textbf{(b)}~Factorial ablation on B1: statefulness contributes $-26.4$\,pp and multi-stage coverage $-13.6$\,pp; the two mechanisms are individually necessary and their combination yields super-additive gains, validating Propositions~\ref{prop:memory} and~\ref{prop:correlation}.
  \textbf{(c)}~Cross-LLM generalisation on B1: defence transfers across four backbones ($2.5\!\times$-$9.5\!\times$ reduction), with invariant relative difficulty ordering.}
  \label{fig:results}
\end{figure*}

\paragraph{High-efficacy surfaces (B1, B2a, B3).}
The \mta excels where adversarial artifacts carry explicitly observable signals.
On \textbf{B3} ($14.39\times$), C1 is highly effective against direct adversarial queries, reducing ASR to 4.1\% (95\%~CI: 3.7-4.5\%). Per-OWASP breakdown for the six instantiated categories (Table~\ref{tab:owasp}, Appendix~\ref{app:owasp}; $n=4{,}978$ of 10,005): information disclosure (LLM02) and supply-chain attacks (LLM03) are eliminated entirely; insecure output handling (LLM05) drops to $<$1\%. Residual ASR concentrates in prompt injection (LLM01: $64\!\to\!8\%$) and misinformation (LLM09), where adversarial content most closely resembles legitimate requests.
On \textbf{B1} ($4.87\times$), belief-state accumulation across C3 and C5 reduces ASR from 76.5\% to 15.7\% (95\%~CI: 15.0-16.4\%), detecting subtly poisoned content whose adversarial intent emerges only when correlated across pipeline stages.
\textbf{B2a} (OCR-mediated, $11.15\times$) is more vulnerable than B2b since OCR faithfully transcribes adversarial text into the generator context: $22.3\!\to\!2.0\%$ (95\%~CI: 1.3-3.1\%).

\paragraph{The semantic resolution limit (B4, B2b).}
\textbf{B4} (Tool-Flip) yields the lowest reduction ($1.29\times$; $82.1\!\to\!63.5\%$; 95\%~CI: 62.7-64.3\%), confirming the semantic resolution limit of text-based belief tracking. Tool-flip attacks manipulate $\Amain$'s action-selection logic at a semantic level closely resembling legitimate reasoning; we analyse this gap in \S\ref{sec:discussion}.
\textbf{B2b} (direct multimodal, $1.84\times$) proves harder than B2a ($10.1\!\to\!5.5\%$; 95\%~CI: 4.3-7.1\%), delineating a perceptual boundary where visual injections bypass the symbolic reasoning accessible to the text-based trust policy.

\paragraph{Agentic integrity (B5, $5.45\times$).}
Multi-stage attacks on the orchestrator's reasoning chain are reduced to 8.9\% (95\%~CI: 8.1-9.8\%) through belief-state accumulation across stages (Proposition~\ref{prop:memory}), enabling detection of compound attacks where no single observation is independently suspicious.

\paragraph{Cross-LLM generalisation.}
Efficacy transfers across protected backbones (Figure~\ref{fig:results}(c); Appendix~\ref{app:perllm}).
On B1, ASR ranges from 7.5\% (Llama-3.3-70B) to 31.5\% (GPT-4o-mini) under \system, vs.\ baseline ranges of 70.9-78.4\%, yielding $2.5\!\times$-$9.5\!\times$ reduction factors.
The relative difficulty ordering (B4 hardest, B3 easiest) remains invariant across all four backbones, suggesting that effectiveness derives from architectural placement and the belief-state mechanism rather than idiosyncrasies of a single backbone.

\subsection{Ablation: Empirical Validation of Theory}
\label{sec:ablation}

We employ a $2\!\times\!2$ factorial design on B1 (Table~\ref{tab:ablation}, Figure~\ref{fig:results}(b)) to empirically validate the design principles derived in \S\ref{sec:pomdp}. B1 is selected as the median-difficulty surface that exercises all checkpoints in our evaluated instantiation (C1-C5) and stresses both belief-state accumulation and spatial coverage.

\begin{table}[t]
\centering
\footnotesize
\setlength{\tabcolsep}{5pt}
\renewcommand{\arraystretch}{1.08}
\begin{tabularx}{\columnwidth}{@{}Xccc@{}}
\toprule
\textbf{Configuration} & \textbf{ASR\,$\downarrow$} & $\boldsymbol{\Delta}$\,\textbf{pp} & \textbf{Refusals} \\
\midrule
No \mta \emph{(baseline)} & 76.5 & +19.8 & - \\
Stateless, final-only     & 56.7$^\dagger$ & 0.0$^\dagger$ & - \\
Stateful, final-only      & 43.9 & $-12.8$ & 18 \\
Stateless, multi-stage    & 56.7 & ref. & 45 \\
\textbf{Stateful, multi-stage} & \textbf{30.3} & $\mathbf{-26.4}$ & \textbf{42} \\
\bottomrule
\end{tabularx}
\caption{Ablation on B1~(\%). \emph{Stateful}: belief state maintained across checkpoints. \emph{Multi-stage}: all C1-C5. $\Delta$\,pp: change vs.\ stateless, multi-stage (ref.). Refusals: total \mta block actions per evaluation run. $\dagger$\,Implied by Proposition~\ref{prop:correlation}(i); not independently measured. Stateless configurations structurally model existing boundary-filter paradigms.}
\label{tab:ablation}
\end{table}

\paragraph{Structural equivalence to existing defences.}
The ablation configurations instantiate common defence paradigms: \emph{stateless, final-only} models boundary classifiers such as Llama Guard~\citep{inan2023llamaguard}, while \emph{stateless, multi-stage} models multi-point guardrails such as NeMo Guardrails~\citep{rebedea2023nemo}. Proposition~\ref{prop:correlation}(i) characterises when these stateless deployments collapse under homogeneous filtering, and the remaining gap to \system isolates the incremental value of belief conditioning beyond stateless boundary filtering.

\paragraph{Validation of Propositions~\ref{prop:correlation} and~\ref{prop:memory}.}
The stateless multi-stage configuration achieves $56.7\%$ ASR, and Proposition~\ref{prop:correlation}(i) implies the corresponding stateless final-only detector is equivalent in the high-correlation regime (Table~\ref{tab:ablation}, $\dagger$). Introducing belief state yields a $26.4$\,pp gain ($56.7\%\!\to\!30.3\%$), confirming the strict value of memory (Proposition~\ref{prop:memory}), and restores marginal value to additional checkpoints as evidenced by the $13.6$\,pp gap between stateful final-only ($43.9\%$) and stateful multi-stage ($30.3\%$), supporting Proposition~\ref{prop:correlation}(ii).

\paragraph{Refusal-stage analysis.}
On B5, refusals concentrate at \textbf{C2} (agent action validation), indicating that multi-stage gating can block integrity attacks before downstream reasoning corruption.

\paragraph{Latency and cost.}
On $N\!=\!100$ Natural Questions queries, mean end-to-end response time increases from $2.59 \pm 0.77$\,s to $8.64 \pm 2.92$\,s, a $3.34\times$ overhead. In our evaluated configuration, this corresponds to an estimated $\sim$3,500-5,000 additional input tokens and $\sim$800-1,200 output tokens per query; full analysis and optimisation paths are provided in Appendix~\ref{app:latency_cost}.

\section{Discussion}
\label{sec:discussion}

\paragraph{Inference-time policy approximation.}
LLM inference can approximate policies for security POMDPs with semantic state spaces by realising belief updates and actions through structured prompts rather than weight updates. In our evaluation, this yields a $6.50\times$ average ASR reduction factor while preserving auditability and enabling upgrades by swapping or fine-tuning the trust-agent model without retraining $\Amain$ or $\Lgen$.

\paragraph{Belief compression.}
Storing adversarial intent as an abstract belief state keeps token costs bounded: naively appending full logs causes prompt length to grow with the number of checkpoints, whereas the fixed-format $\Phi_t$ yields bounded per-checkpoint context. This makes stateful multi-stage defence feasible under production context limits without retaining raw interaction traces in the trust prompt.

\paragraph{Synergy of statefulness and coverage.}
The factorial ablation (Table~\ref{tab:ablation}) exhibits a super-additive interaction: memory and multi-stage coverage are individually necessary and jointly strongest. Memory breaks checkpoint-level correlation (Proposition~\ref{prop:correlation}), while checkpoints provide the observations required to update $\Phi_t$.

\paragraph{The semantic resolution limit (B4).}
B4 yields the weakest defence ($1.29\times$), reflecting a semantic resolution limit: adversarial and benign trajectories can induce observations that are indistinguishable under $O$, constraining discrimination even with full history. Mitigating this class requires complementary non-semantic controls such as tool allowlists, argument-schema validation, and query-conditioned constraints.
Counterfactual checks over tool choice (e.g., re-evaluating actions under ablated context) are a natural next step.

\paragraph{Operational corollary.}
Deployment heuristic: prioritise belief-state infrastructure over scaling stateless checkpoints, since high correlation can render stateless checkpoint scaling redundant (Proposition~\ref{prop:correlation}). The \system overlay remains composable with retrieval-stage defences~\citep{masoud2026sdrag,pathmanathan2025ragpart}, injection-specific methods~\citep{yu2026defense}, and trajectory anomaly detection~\citep{advani2026trajectory}.

\section{Related Work}
\label{sec:related}

\paragraph{Agentic threats.}
Agentic RAG expands the attack surface beyond the user prompt, including retrieval poisoning~\citep{zou2024poisonedrag,cheng2024trojanrag,xue2024deceiverag,zhao2025exploring}, multimodal injections~\citep{figstep2023,zeeshan2025chameleon}, indirect tool injections~\citep{greshake2023indirectprompt,zhan2024injecagent}, multi-agent subversion~\citep{gu2024agentsmith}, persistent state corruption~\citep{srivastava2025memorygraft}, and MCP-mediated attacks~\citep{janjusevic2025hiding}. Surveys and audits highlight systematic coverage gaps in current defences and scanners~\citep{yu2025trustllmagentsurvey,brokman2025insights}.

\paragraph{Defences and formalisation.}
Boundary filters and guardrails are largely stateless~\citep{inan2023llamaguard,rebedea2023nemo,zeng2024autodefense,ganon2025diesel}, while component-specific defences target isolated vectors~\citep{masoud2026sdrag,pathmanathan2025ragpart,tan2024revprag,yu2026defense,wallace2024instructionhierarchy,suo2025argus,ramakrishnan2025securing,syed2025toward}. TrustAgent~\citep{hua2024trustagent} is closest in spirit but does not maintain a belief state; we formalise agentic security as a POMDP~\citep{kaelbling1998pomdp,chatterjee2012adversarialpomdp,gmytrasiewicz2005ipomdp} and operationalise belief-conditioned intervention.
Our ablation isolates the impact of statefulness and exposes a regime where scaling stateless checkpoints yields no marginal benefit under high correlation (Proposition~\ref{prop:correlation}).
Viewed this way, \system functions as a belief integrator that can layer atop existing component defences to supply cross-stage context.

\paragraph{Stateless scaling and trajectory detectors.}
Absent a shared information state, multi-point stateless deployments remain memoryless detectors whose block decisions depend only on the local observation. Formally, this corresponds to $P(\text{block}\mid o_t, h_t) = P(\text{block}\mid o_t)$, and Proposition~\ref{prop:correlation}(i) identifies a high-correlation regime where additional stateless checkpoints yield zero marginal benefit. Reactive trajectory-level methods such as Trajectory Guard~\citep{advani2026trajectory} detect anomalous completed interactions, whereas \system enforces proactive gating at internal trust boundaries via belief-conditioned intervention.

\paragraph{Control-theoretic foundations.}
POMDPs and their adversarial and interactive variants formalise acting under partial observability~\citep{kaelbling1998pomdp,chatterjee2012adversarialpomdp,gmytrasiewicz2005ipomdp}, but exact solvers are intractable for natural-language state spaces. Our approach operationalises approximate belief tracking via structured LLM inference and checkpointed intervention in the agent loop.

\section{Conclusion}
\label{sec:conclusion}

We introduced \system, formalising agentic RAG security as a POMDP where adversarial intent is latent and tracked via an approximate belief state. On 43,774 ART-SafeBench instances, \system achieves a $6.50\times$ average ASR reduction factor with negligible utility impact, and factorial ablation validates the predicted role of memory and checkpoint coverage.
On B1, the factorial ablation shows that belief state and multi-stage coverage are individually necessary (26.4\,pp and 13.6\,pp gaps) and jointly super-additive.
Because the trust layer is model-agnostic and inference-aligned, it can be deployed with specialised safety models or upgraded foundations, with defence fidelity improving with trust-model capability.
More broadly, the checkpoint set is configurable and should be selected to match application trust boundaries rather than a fixed pipeline.

\paragraph{Limitations.}
We rely on a deterministic GPT-4o judge, so evaluator bias may remain despite $>95\%$ manual agreement. Tool-flip attacks reveal a language-level observational-equivalence limit and require deterministic tool governance beyond semantic belief tracking. The overlay adds a $3.34\times$ latency overhead (Appendix~\ref{app:latency_cost}) and we do not evaluate adaptive checkpoint scheduling or parallelisation. As a fixed policy, \system may be vulnerable to white-box prompt optimisation and recursive injection, and other orchestration or memory designs may require different checkpoint placements.

\newpage
\bibliography{custom}

\appendix

\section{MTA Prompt Template}
\label{app:prompt}

The trust policy $\pi_T$ is encoded as a structured system prompt provided to the \mta's frozen LLM at each checkpoint. This prompt serves as the \emph{inference-time operationalisation} of the POMDP-motivated policy, comprising four functional components:

\textbf{(1) Role Specification:} Establishes the LLM as the decision-making kernel, enforcing conservative reasoning principles (precautionary principle, stateful assessment, and proportional response).

\textbf{(2) Checkpoint-Specific Directives:} Modular instructions tailored to the current observation type. For instance, C3 (Retrieval) focuses on identifying latent instruction overrides embedded in documents, while C2 (Action) validates the semantic consistency of tool calls against the user's original intent.

\textbf{(3) Belief State Injection:} The cumulative belief state $\Phi_{t-1}$ is injected verbatim into a designated ``Current Suspicion Context'' field. This provides the LLM with explicit access to the cross-stage history $h_{t-1}$, enabling the detection of distributed attacks.

\textbf{(4) Output Schema:} A strict JSON schema requiring: (a) risk assessment with chain-of-thought reasoning, (b) a scalar risk score $r \in [0,1]$, (c) the decision $\delta_t \in \{\approve, \mitig, \refuse\}$, (d) sanitised artifacts (if $\delta_t = \mitig$), (e) the updated belief state $\Phi_t$, and (f) internal justification for audit logs.

\paragraph{Verbatim Template (Abridged).}
The system prompt below illustrates how the POMDP policy is operationalised via natural language instructions.

\medskip
\noindent\fbox{\begin{minipage}{0.96\columnwidth}\footnotesize
\textbf{System Prompt $\pi_T$}\\[3pt]
You are the Modular Trust Agent (MTA) operating within a multimodal agentic RAG framework. The system operates in an adversarial environment where user queries, database entries, or tool outputs may contain malicious instructions, prompt-injection attempts, or other adversarial behaviour. Suspicious or harmful patterns can appear gradually across multi-turn or multi-agent conversations.

\medskip\noindent
\textbf{Role.} At each inference step, you receive:
(1)~Newly presented content $C$ (user query, tool decision, retrieved data, or final answer draft);
(2)~A partial Suspicion Context $\Phi$, your memory of suspicious or adversarial clues accumulated so far;
(3)~Policy guidelines $\pi_T$ for handling disallowed content.

\medskip\noindent
\textbf{Procedure.}
1.~Integrate $C$ with $\Phi$ to update beliefs about adversarial patterns. (Reflect partial observability: you do NOT see an entire malicious plan at once.)
2.~Decide among \{\textsc{Approve}, \textsc{Mitigate}, \textsc{Refuse}\}:
   - \textsc{Approve} if safe with no suspicious aspects.
   - \textsc{Mitigate} if partially disallowed but sanitisable while preserving valuable information.
   - \textsc{Refuse} if irredeemably harmful or disallowed (e.g., instructions for violent wrongdoing, persistent prompt-injection).
3.~Return structured JSON: \texttt{\{"decision": ..., "updatedContext": ..., "hiddenJustification": ...\}}.
\texttt{updatedContext} is the revised suspicion context; escalate suspicion logs upon repeated suspicious attempts. \texttt{hiddenJustification} is the internal rationale for audit logs.

\medskip\noindent
\textbf{Policy Rules.}
- If you see potential prompt injections (``Ignore your rules'', etc.), do not comply; use \textsc{Mitigate} for minor cases or \textsc{Refuse} if severe.
- If queries request disallowed content (violence instructions, extremist content, PII), \textsc{Refuse}.
- If suspicious patterns recur, escalate from \textsc{Approve} to \textsc{Mitigate} or \textsc{Refuse}.
- For hidden or obfuscated instructions to override the system, \textsc{Mitigate} or \textsc{Refuse}.
- Remember we are in a POMDP: carry forward suspicious signals in \texttt{updatedContext}.
\end{minipage}}

\medskip
\noindent\fbox{\begin{minipage}{0.96\columnwidth}\footnotesize
\textbf{Per-Checkpoint User Message}\\[3pt]
\texttt{CURRENT SUSPICION CONTEXT ($\Phi$):}\\
\texttt{\{$\Phi_{t-1}$\}}\\[2pt]
\texttt{NEW CONTENT TO CHECK:}\\
\texttt{\{$C_t$\}}\\[2pt]
\texttt{INSTRUCTIONS:}\\
\texttt{1. Integrate the new content with the suspicion context.}\\
\texttt{2. Decide among \{APPROVE, MITIGATE, REFUSE\}.}\\
\texttt{3. Return structured JSON.}
\end{minipage}}

\section{MTA Inference-Action Algorithm}
\label{app:algorithm}

Algorithm~\ref{alg:mta} formalises the control loop executed at each checkpoint.

\begin{algorithm}[t]
\small
\caption{\mta Inference-Action Cycle at Checkpoint $t$}
\label{alg:mta}
\begin{algorithmic}[1]
\Require Artifact $C_t$, belief state $\Phi_{t-1}$, policy $\pi_T$
\Ensure Decision $\delta_t$, updated $\Phi_t$, output $C'_t$
\State $o_t \leftarrow \textsc{Observe}(C_t, \mathrm{type}_t)$
\State $r_t, \Phi_t, \mathrm{trace}_t \leftarrow \textsc{Infer}(\Phi_{t-1}, o_t, \pi_T)$
\State $\delta_t \leftarrow \textsc{Decide}(r_t)$
\If{$\delta_t = \approve$}
    \State $C'_t \leftarrow C_t$
\ElsIf{$\delta_t = \mitig$}
    \State $C'_t \leftarrow \textsc{Sanitise}(C_t, \mathrm{trace}_t)$
\Else
    \State $C'_t \leftarrow \textsc{SafeResponse}()$; \textbf{terminate}
\EndIf
\State \Return $(\delta_t, \Phi_t, C'_t)$
\end{algorithmic}
\end{algorithm}

\section{Complete Pipeline with \mta}
\label{app:fullalg}

Algorithm~\ref{alg:full} presents the end-to-end processing loop, detailing the integration of our evaluated instantiation (C1-C5) into the agentic RAG workflow.

\begin{algorithm}[t]
\small
\caption{Agentic RAG Pipeline with \mta Integration}
\label{alg:full}
\begin{algorithmic}[1]
\Require Query $q$, knowledge store $\mathcal{K}$, tools $\mathcal{T}$
\Ensure Safe response $r$ or refusal
\State $\Phi_0 \leftarrow \varnothing$
\State $\delta_1, \Phi_1, q' \leftarrow \textsc{MTA}(q, \Phi_0)$ \Comment{C1: Query}
\If{$\delta_1 = \refuse$}
    \State \Return \textsc{Safe}()
\EndIf
\State $a \leftarrow \Amain.\textsc{Plan}(q')$
\State $\delta_2, \Phi_2, a' \leftarrow \textsc{MTA}(a, \Phi_1)$ \Comment{C2: Action}
\If{$\delta_2 = \refuse$}
    \State \Return \textsc{Safe}()
\EndIf
\State $D \leftarrow \textsc{Execute}(a', \mathcal{K}, \mathcal{T})$
\State $\delta_3, \Phi_3, D' \leftarrow \textsc{MTA}(D, \Phi_2)$ \Comment{C3: Data}
\If{$\delta_3 = \refuse$}
    \State \Return \textsc{Safe}()
\EndIf
\State $\Phi_4 \leftarrow \Phi_3$
\ForAll{tool output $t_i \in D'$} \Comment{C4: Tools}
    \State $\delta_4^i, \Phi_4, t'_i \leftarrow \textsc{MTA}(t_i, \Phi_4)$
    \If{$\delta_4^i = \refuse$}
        \State \Return \textsc{Safe}()
    \EndIf
\EndFor
\State $r \leftarrow \Lgen.\textsc{Generate}(q', D')$
\State $\delta_5, \Phi_5, r' \leftarrow \textsc{MTA}(r, \Phi_4)$ \Comment{C5: Output}
\If{$\delta_5 = \refuse$}
    \State \Return \textsc{Safe}()
\EndIf
\State \Return $r'$
\end{algorithmic}
\end{algorithm}

\section{Per-LLM Results on B1}
\label{app:perllm}

\begin{table}[t]
\centering
\footnotesize
\setlength{\tabcolsep}{5pt}
\renewcommand{\arraystretch}{1.08}
\begin{tabularx}{\columnwidth}{@{}Xccc@{}}
\toprule
\textbf{LLM Backbone} & \textbf{Base} & \textbf{+\,\mta} & \textbf{Red.} \\
\midrule
GPT-4o        & 76.5 & 15.7 & $4.9\!\times$ \\
GPT-4o-mini   & 78.4 & 31.5 & $2.5\!\times$ \\
Llama-3.3-70B & 70.9 &  7.5 & $9.5\!\times$ \\
GPT-4.1       & 78.0 & 14.0 & $5.6\!\times$ \\
\bottomrule
\end{tabularx}
\caption{Per-LLM end-to-end ASR~(\%) and reduction factors for B1 (Text Poisoning), varying the protected backbone while keeping the trust layer fixed.}
\label{tab:perllm_b1}
\end{table}

\section{ART-SafeBench Generation Framework}
\label{app:artsafebench_generation}

ART-SafeBench is constructed via a closed-loop validated synthesis pipeline that produces adversarial instances together with explicit success predicates.

\paragraph{Four-stage summary.}
Given an attack goal $g$ (e.g., OWASP-aligned objective) and an attacker capability class (e.g., black-box prompting vs.\ partial-access poisoning), the pipeline:
\textbf{(1) Specifies} a success predicate $\kappa_g$ that operationalises the violation of interest (harmful output, control subversion, exfiltration, or resource abuse);
\textbf{(2) Crafts} an adversarial stimulus $x$ and (optionally) benign context $\mu$ that realises the capability model;
\textbf{(3) Executes} a reference unhardened agentic RAG pipeline on $(x,\mu)$ to obtain an outcome trace $y$ (including intermediate tool and retrieval artifacts); and
\textbf{(4) Validates} the instance by checking whether $y$ satisfies $\kappa_g$, retaining only validated instances for which the reference system exhibits a violation under the chosen predicate.

\paragraph{Five-step instantiation.}
In our implementation, the pipeline is instantiated as: \textbf{(i) Definition} (sample $(g,\Gamma)$ and emit $\kappa_g$), \textbf{(ii) Crafting} (generate candidate $x$ compatible with $\Gamma$), \textbf{(iii) Context generation} (optionally generate $\mu$ for realism), \textbf{(iv) Simulator} (run the reference system deterministically to obtain $y$), and \textbf{(v) Judge} (deterministically evaluate $\kappa_g(y)$ and accept only if successful). Empirically, the closed-loop acceptance rate is $\sim$37\%.

\section{ART-SafeBench Record Schema}
\label{app:schema}

Each ART-SafeBench record follows a strict schema to ensure reproducibility:
\texttt{id} (SHA-256 hash of provenance),
\texttt{benchmark} (B1-B5),
\texttt{attack\_goal} (natural language description),
\texttt{attack\_payload} (the specific text/image/tool-call injection),
\texttt{benign\_context} (ground truth context),
\texttt{category} (OWASP classification), and
\texttt{metadata} (target model, random seed, timestamp).
Records are stored in JSON Lines format.

\section{Extended OWASP Results}
\label{app:owasp}

\begin{table}[t]
\centering
\footnotesize
\setlength{\tabcolsep}{5pt}
\renewcommand{\arraystretch}{1.05}
\begin{tabularx}{\columnwidth}{@{}Xrcc@{}}
\toprule
\textbf{OWASP Category} & $n$ & \textbf{Base} & \textbf{+\,\mta} \\
\midrule
Prompt injection (LLM01)  & 1,477 & 64 &  8 \\
Insecure output (LLM05)   & 716 & 71 & $<$1 \\
Sensitive info.\ (LLM02)  & 715 & 36 &  0 \\
Supply chain (LLM03)      & 623 & 58 &  0 \\
Misinfo./harmful (LLM09)  & 738 & 63 &  5 \\
Model DoS (LLM10)         & 709 & 61 &  7 \\
\bottomrule
\end{tabularx}
\caption{Per-OWASP ASR~(\%) for B3 for the six instantiated categories ($n\!=\!4{,}978$ of 10,005). Information disclosure (LLM02) and supply-chain (LLM03) attacks are eliminated entirely.}
\label{tab:owasp}
\end{table}

\section{Proof Sketches}
\label{app:proof_sketches}

\paragraph{Proposition~\ref{prop:memory}.}
The weak inequality follows from set inclusion: $\Pi_\Omega \subset \Pi_\Phi$.
For strictness, partial observability induces aliasing: there exist histories $h_t, h'_t$ with the same instantaneous observation $o_t$ but different posteriors over $\Sadv$. Any stateless policy must choose the same action on both $h_t$ and $h'_t$, while a belief-conditioned policy can separate them via $\Phi_t$, yielding strictly higher expected value under $J$ in general.

\paragraph{Proposition~\ref{prop:correlation}.}
(i) Under the stated condition, $d(o_t) = d(o_1)$ for all $t$, so $D_t = D_1$ deterministically and the union reduces to a single event. (ii) The belief state acts as an integrator: sub-threshold evidence at $t$ elevates $\Phi_t$, lowering the effective decision boundary at $t\!+\!1$ and creating a monotone increase in $P(D_{t+1}\!=\!1 \mid \cdots)$ beyond the memoryless baseline.

\section{Latency and Cost Analysis}
\label{app:latency_cost}

The \mta introduces $3.34\times$ latency overhead: mean end-to-end response time increases from $2.59 \pm 0.77$\,s to $8.64 \pm 2.92$\,s ($N\!=\!100$ NQ queries).
Each defended query traverses up to $K$ checkpoint LLM calls in addition to the baseline generation call ($K\!=\!5$ in our evaluated instantiation), yielding an estimated $\sim$3,500-5,000 additional input tokens and $\sim$800-1,200 output tokens per query (from belief-state context, checkpoint prompts, and structured JSON responses).
The overhead is dominated by the sequential dependency across checkpoints (C1-C5); the increased variance stems from the variable number and complexity of LLM calls conditioned on the input and evolving belief state.
Unvalidated optimisation paths include checkpoint parallelisation (C3/C4 concurrent), adaptive activation (low-risk queries bypass intermediate checkpoints), and asymmetric model selection (smaller LLM for low-risk checkpoints).
In security-sensitive deployments where the cost of a successful attack substantially exceeds latency cost, this overhead represents an acceptable trade-off.

\end{document}